\documentclass[letterpaper, 10 pt, conference]{ieeeconf}  %
\IEEEoverridecommandlockouts                              %
\usepackage{graphicx,subcaption}
\usepackage{amsmath,amssymb}
\usepackage{mathrsfs}
\usepackage{booktabs}
\usepackage{color}
\usepackage{mathrsfs}
\usepackage{algorithm}

\usepackage{tikz}
\usetikzlibrary{arrows.meta}
\usetikzlibrary{positioning}
\usetikzlibrary{fit}
\usetikzlibrary{calc}

\newtheorem{theorem}{Theorem}
\newtheorem{assumption}{Assumption}

\newtheorem{proposition}{Proposition}
\newtheorem{lemma}{Lemma}
\newtheorem{remark}{Remark}

\newcommand{\Ltwo}{\boldsymbol{\rm L}_{2}}
 
\newcommand{\Ltwoe}{\boldsymbol{\rm L}_{2e}} 

\usepackage{amssymb}

\title{\LARGE \bf
Distributed Synchronisation of Heterogeneous Dynamical Networks With Nonlinear Diffusive Couplings
}
\author{Yongkang Su, Joaquin Carrasco, I\~{n}aki Esnaola and Lanlan Su
\thanks{Y. Su and I. Esnaola are with the School of Electrical and Electronic Engineering, University of Sheffield, S1 3JD, Sheffield, UK
        {\small (e-mail: ysu34@sheffield.ac.uk; esnaola@sheffield.ac.uk)}}%
\thanks{J. Carrasco and L. Su are with the Department of Electrical and Electronic Engineering, University of Manchester, M13 9PL, Manchester, UK
        {\small (e-mail: joaquin.carrasco@manchester.ac.uk; lanlan.su@manchester.ac.uk)}}%
}

\begin{document}
\maketitle
\thispagestyle{empty}
\pagestyle{empty}

\begin{abstract}
This letter investigates the problem of output synchronisation in heterogeneous dynamical networks with nonlinear diffusive couplings in the presence of disturbances on the coupling links. By exploiting relative dissipativity properties between adjacent agents, distributed conditions are established to guarantee output synchronisation. Specifically, these conditions can be verified using only local information associated with neighbouring agents and coupling links. As an illustration, a heterogeneous network of Goodwin oscillators is considered, where the relative dissipativity properties between neighbouring oscillators are characterised and used to analyse synchronisation.
\end{abstract}

\begin{keywords}
Heterogeneous networks, nonlinear coupling, distributed condition, synchronisation.
\end{keywords}

\section{Introduction}
Synchronisation in networks of dynamical systems has been extensively studied over the past decades due to its wide range of applications in engineering and natural systems, including multi-robot coordination \cite{olfati2007consensus}, sensor networks \cite{olfati2005consensus}, power systems \cite{dorfler2013synchronization}, and biological oscillators \cite{stan2007output}. In such settings, a collection of interacting agents aims to achieve a common behaviour through local information exchange governed by an underlying communication topology.

An important research direction studies synchronisation by exploiting \emph{passivity properties} of the agent dynamics. In these approaches, synchronisation can often be guaranteed by combining the passivity of individual agents with diffusive coupling protocols imposed by the network interconnection; see, e.g., \cite{chopra2006passivity,arcak2007passivity,chopra2012output,yu2013output}. A related line of work adopts an \emph{incremental viewpoint}, where dissipativity properties are formulated with respect to differences of trajectories rather than particular signals. In particular, the framework developed in \cite{scardovi2010synchronization} analyses the disagreement dynamics obtained by projecting the network behaviour onto the subspace orthogonal to the synchronisation manifold. This formulation leads to dissipativity conditions expressed in terms of relative outputs between agents and provides a convenient characterisation of synchronisation phenomena. Variants of this approach have subsequently been explored in several works; see, for example, \cite{hamadeh2011global,liu2011incremental} and the references therein.

However, many results based on such projected or disagreement-based formulations focus on \emph{networks of identical agents}, which simplifies the analysis of the projected dynamics. For example, the seminal work  \cite{scardovi2010synchronization} and several subsequent studies consider homogeneous networks. However, in many practical applications, agents are inherently heterogeneous due to differences in physical parameters, operating conditions, or modelling uncertainties. Extending synchronisation results to such settings is therefore important but technically challenging, since the symmetry properties that facilitate the analysis in homogeneous networks are no longer available. Extensions to heterogeneous networks have been investigated in more recent works, including \cite{franci2011input,alekajbaf2015input} and our previous work \cite{su2025output}, where synchronisation conditions are derived for networks with non-identical agent dynamics.

Nevertheless, these conditions, including those for homogenous and heterogeneous networks, are typically \emph{centralised}, in the sense that their verification requires global information about the network or collective properties of the agents.  In large-scale networked systems, such requirements may be restrictive due to scalability limitations and reduced agility in responding to network changes \cite{welikala2024decentralized}. 
From both a theoretical and a practical standpoint, it is therefore desirable to develop \emph{distributed synchronisation conditions} whose verification relies solely on local quantities associated with pairs of adjacent agents. Motivated by this objective, this paper studies the synchronisation problem for networks of heterogeneous nonlinear systems interconnected through nonlinear diffusive couplings. By exploiting relative dissipativity properties between \emph{neighbouring} agents, we derive \emph{distributed conditions} that guarantee synchronisation using only local information associated with adjacent agents and coupling links. To illustrate the proposed results, we consider a heterogeneous network of Goodwin oscillators as a case study. The relative dissipativity properties between neighbouring oscillators are first characterised, and the resulting distributed conditions are then used to analyse the network synchronisation behaviour.

The notation used throughout this letter is summarised as follows. Let $\mathbb{R}$ be the set of real numbers. Given a matrix $A$, let $A^\top$ denote its transpose, the notation $A\succ 0$ (respectively, $A\succeq  0$)  indicates that $A$ is positive definite (respectively, positive semidefinite). Denote ${\rm col}\left( {{a_1}, \ldots ,{a_m}} \right) := {\left[ {a_1, \ldots ,a_m} \right]^{\top}}$ as the column vector with scalars ${{a_1}, \ldots ,{a_m}}$. Let $\mathrm{diag}\{v_1,\dots, v_m\}$ denote the diagonal matrix with diagonal entries $v_1, \ldots, v_m$, and let $\mathrm{diag}(A)$ denote the operator that extracts the diagonal entries of a matrix and forms a diagonal matrix from them. Denote by $\Ltwo^m$ the space of signals $x:\left[ {0,\infty } \right) \to {\mathbb{R}^m}$ satisfying $\int_0^\infty  {{{\left| {x(t)} \right|}^2}dt < \infty}$, in the Lebesgue sense, with $|\cdot|$ being the Euclidean norm. Define $\Ltwoe^m: =\{ {x:\left[ {0,\infty } \right) \to {\mathbb{R}^m}\;|\;{P_T}x \in \Ltwo^m ,\forall T \ge 0} \}$, where $P_T$ is the truncation operator that satisfies $\left( {{P_T}x} \right)(t) = x(t)$ for $t \le T$ and $\left( {{P_T}x} \right)(t) = 0$ for $t>T$. For $x,y \in \Ltwoe^m $ and $T \ge 0$, ${\left\| x \right\|_T} := {\left( {\int_0^T {{{\left| {x(t)} \right|}^2}dt} } \right)^{1/2}}$ and ${\left\langle {x,y} \right\rangle _T} := \int_0^T {{x^{\top}}(t)y(t)dt}$. An operator $G:\Ltwoe^m \to \Ltwoe^m$ is said to be causal if ${P_T}G{P_T} = {P_T}G$ for all $T \ge 0$. Finally, For a finite set $\mathcal{N}$, $|\mathcal{N}|$ denotes the cardinality  of the set.

\section{Preliminaries and Problem Formulation}\label{sec: problem formulation}
\subsection{Graph Theory}
Let $\mathcal{G} = (\mathcal{N},\mathcal{E})$ be an undirected graph with node set
$\mathcal{N} = \{1,2,\ldots,n\},$
and edge set $\mathcal{E} = \{ e_1, e_2, \ldots, e_p \},$
where each edge $e_k$ connects nodes $(i,j)$. As the graph is undirected, for every $e_k=(i,j)$, there exists $k^*$ such than $e_{k^*}=(j,i)$. For each node $i$, let $\mathcal{E}_i$ denote the set of edges that include the node $i$, and let $\mathcal{N}_i$ denote the set of neighbours of node $i$. For an undirected graph $\mathcal{G}$, we assign an arbitrary orientation to each undirected edge. Let
$\mathcal{E}_i^{+}$ ($\mathcal{E}_i^{-}$)
denote the set of edges for which node $i$ is the positive (negative) endpoint under the chosen orientation. The heterogeneous network structure in Fig.~\ref{fig.network} can be represented by the incidence matrix $D = [d_{ik}] \in \mathbb{R}^{n \times p}$, where 
\[
d_{ik} = 
\begin{cases}
\;\;1, & \text{if edge } e_k \in \mathcal{E}_i^{+}, \\[4pt]
-1,   & \text{if edge } e_k \in \mathcal{E}_i^{-}, \\[4pt]
\;\;0, & \text{otherwise.}
\end{cases}
\]
This convention ensures that each column of $D$ corresponds to an edge and has exactly one $+1$ and one $-1$, representing its endpoints, while all other entries are zero.
Although the graph is undirected, assigning an arbitrary orientation to each edge is a standard device that allows the incidence matrix to be defined consistently. 

For a simple undirected graph $\mathcal{G}=(\mathcal{N},\mathcal{E})$ and any 
node $i\in\mathcal{N}$, we define the neighbour set
\[
\mathcal{N}_i := \{\, \ell\in\mathcal{N} : (i,\ell)\in\mathcal{E} \,\},
\qquad
r_i := |\mathcal{N}_i|.
\]
For an edge $e_k=(i,j)\in\mathcal{E}$, the set of \emph{common neighbours} of 
$i$ and $j$ is
$\mathcal{N}_{ij} := \mathcal{N}_i \cap \mathcal{N}_j,$
and $r_{ij} := |\mathcal{N}_{ij}|$.
We further define the \emph{exclusive neighbour set}
\[
\mathcal{N}_i^{\,j}
:= \{\, \ell \in \mathcal{N}_i : \ell \notin \mathcal{N}_{ij},\;\ell \neq j \,\}.
\]
Consequently, the total number of neighbours that belong to only one of 
either $i$ or $j$ (with $(i,j)\in\mathcal{E}$ so that $i$ and $j$ are adjacent, 
and excluding $i$ and $j$ themselves) is
\[
\bar r_{ij}
\;=\;
|\mathcal{N}_i^{\,j}| + |\mathcal{N}_j^{\,i}|
\;=\;
r_i + r_j - 2 r_{ij} - 2.
\]
Then, for each edge $e_k=(i,j)$, let us define $\tilde{r}_k=r_{ij}$, $\bar r_{k}=\bar r_{ij}$.
For a given graph $\mathcal{G}$, let us defined the following matrices:
\[
\begin{aligned}
\Phi_{\mathcal{G}} &= \operatorname{diag}\{\tilde r_1,\dots,\tilde r_p\}, \quad
\bar{\Phi}_{\mathcal{G}} = \frac{1}{2}\operatorname{diag}\{\bar r_1,\dots,\bar r_p\}.
\end{aligned}
\] 

The following proposition presents a positive definiteness property based on the incidence matrix for the subsequent synchronisation analysis.
\begin{proposition}\cite{su2026consensus}\label{prop: positive definite}
Given an undirected and connected graph $\mathcal{G} = (\mathcal{N}, \mathcal{E})$, let  $D\in\mathbb{R}^{n\times p}$  be  its incident matrix, and let $\Omega  =\mathrm{diag}\left\{ {{\mu _1}, \dots, {\mu _n}} \right\}$ and $\Sigma =\mathrm{diag}\{\sigma_1,\ldots,\sigma_p\}$.
It holds that  
\begin{align*}
    M := D^{\top}\Omega D+\Sigma \succ 0
\end{align*}
if for each $e_k=\left( {i,j} \right)\in \mathcal{E}$, 
\begin{align*}
{\sigma _k} + {\mu _i} + {\mu _j} - \left( {{r_i} - 1} \right)\left| {{\mu _i}} \right| - \left( {{r_j} - 1} \right)\left| {{\mu _j}} \right| > 0.
\end{align*}
\end{proposition}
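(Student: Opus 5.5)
The plan is to show that $M$ is a symmetric matrix with positive diagonal that is strictly diagonally dominant by rows. Gershgorin's circle theorem then gives $M\succ 0$. Throughout, I treat the columns of $D$ as indexing the undirected edges, each with its assigned orientation, so that each undirected edge appears once. Symmetry of $M$ is immediate, since $\Omega$ and $\Sigma$ are diagonal.

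The first step is to compute the entries of $M$ from the structure of $D$. For columns $k,l$ we have $M_{kl}=\sum_{m=1}^n \mu_m d_{mk}d_{ml}+\sigma_k\delta_{kl}$.
\begin{itemize}
\item On the diagonal, take $e_k=(i,j)$. Column $k$ has exactly one $+1$ and one $-1$, located at rows $i$ and $j$, so $d_{ik}^2=d_{jk}^2=1$ and $M_{kk}=\sigma_k+\mu_i+\mu_j$.
\item Off the diagonal ($l\neq k$), the product $d_{mk}d_{ml}$ is nonzero only if node $m$ is an endpoint of both $e_k$ and $e_l$.
\item In a simple graph, two distinct edges share at most one endpoint. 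Hence $M_{kl}=\pm\mu_m$ if $e_k$ and $e_l$ share the node $m$, and $M_{kl}=0$ otherwise. The sign depends on the orientations, but only $|M_{kl}|=|\mu_m|$ matters below.
\end{itemize}

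The second step is the counting argument for the row sums. Fix $e_k=(i,j)$. The edges $e_l\neq e_k$ adjacent to $e_k$ split into two disjoint groups: those incident to $i$ but not $j$, and those incident to $j$ but not $i$. An edge incident to both $i$ and $j$ would coincide with $e_k$ in a simple graph. There are exactly $r_i-1$ edges in the first group and $r_j-1$ in the second. Therefore
\begin{align*}
\sum_{l\neq k}|M_{kl}|=(r_i-1)|\mu_i|+(r_j-1)|\mu_j|.
\end{align*}
The hypothesis of Proposition~\ref{prop: positive definite} says precisely that $M_{kk}$ exceeds this sum. Consequently every Gershgorin disc of row $k$ lies in the open right half-plane. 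Since $M$ is symmetric, its eigenvalues are real, and so all eigenvalues are strictly positive, i.e.\ $M\succ0$. Connectedness of $\mathcal{G}$ is not needed in this argument.

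The main obstacle I expect is bookkeeping rather than analysis. I must make sure that each undirected edge is counted once, i.e.\ that $p$ counts undirected edges and not both $e_k$ and $e_{k^*}$. Otherwise $D$ would contain duplicate columns up to sign, and the count $r_i-1$ would be wrong. I must also confirm that the sign of $d_{mk}d_{ml}$ plays no role, which is true because only absolute values enter the dominance bound.
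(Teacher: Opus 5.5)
Your proof is correct. The paper gives no proof of its own here, since the result is quoted from \cite{su2026consensus}. However, the hypothesis says exactly that row $k$ of $M$ has diagonal entry $\sigma_k+\mu_i+\mu_j$ strictly larger than its off-diagonal absolute row sum $(r_i-1)|\mu_i|+(r_j-1)|\mu_j|$. So your diagonal-dominance/Gershgorin argument is the natural one, and presumably the intended one. You are also right that connectedness plays no role.

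Your bookkeeping caveat matters. The one-column-per-undirected-edge reading is the one the paper actually uses: it is what makes $U=-D\Theta(D^\top Y+W)$ agree with \eqref{eq: input}, with $e_{k^*}$ serving only as notation for the reversed direction. Under the double-counting reading the proposition would be false. Take a single edge with $\sigma_k=1$ and $\mu_1=\mu_2=-0.4$. The hypothesis gives $0.2>0$, yet the two twin columns of $D$ produce a $2\times 2$ matrix $M$ with diagonal entries $0.2$ and off-diagonal entries of modulus $0.8$. That matrix has an eigenvalue $-0.6$, so it is not positive definite.
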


\subsection{Problem Formulation}
At each node $i$, consider a dynamical system described by the causal operator $G_i:\Ltwoe \to\Ltwoe$ given by
\begin{equation}\label{eq: system model}
    {y_i} = {G_i}{u_i}, \,i\in\{1,2,\ldots,n\},
\end{equation}
where $u_i, y_i \in \Ltwoe$ denote respectively the input and output of the $i$-th system. The input $u_i$ to the $i$-th system is given by
\begin{equation}\label{eq: input} 
u_i = -\sum\limits_{j \in \mathcal{N}_i} {{{\vartheta}_{k}}\left( {y_i - y_j + w_{ij}} \right)}, \quad i \in \mathcal{N} 
\end{equation}
where the signal $w_{ij} \in \Ltwoe$  represents the disturbance present on the links connecting the $i$-th and $j$-th systems with $e_k=(i,j) \in \mathcal{E}$, and the operator ${\vartheta}_{k}:\Ltwoe \to \Ltwoe$.

Let $Y := {\rm col}\left( {{y_1}, \ldots ,{y_n}} \right)$ and define the stacked vectors $U$ and $W$ analogously. Using the incidence matrix representation, the distributed input law can be written compactly as
\begin{equation}\label{eq: iuput vector}
U = - D\Theta\left(D^{\top} Y+W \right), 
\end{equation}
where $\Theta: \Ltwoe^p \to \Ltwoe^p$ is the lower diagonal nonlinearity in Fig. \ref{fig.network}, hence we can write $V=\Theta\left(D^{\top}Y + W \right)$, where $V:={\rm col}\left( {{v_1}, \ldots ,{v_p}} \right)$. Each diagonal entry is given by  $v_k={\vartheta}_k((W+D^\top Y)_k)$.

In this work, the couplings and the heterogeneous agents are assumed to satisfy the following conditions.

\begin{assumption}\label{assum: coupling}
For each edge $e_k=(i,j)$, the operator ${\vartheta}_{k}:\Ltwoe \to \Ltwoe$ is a time-invariant memoryless sector-bounded nonlinearity characterised by the nonlinear function ${\vartheta}_{k}:\mathbb{R} \to \mathbb{R}$ \footnote{With a slight abuse of notation, $\vartheta_k$ denotes both the operator and its associated nonlinear function.}, 
satisfying $\vartheta_{k}(0)=0$ and the following properties:
\begin{enumerate}
\item for all $x\in\mathbb{R}$, ${{\vartheta _{k^*}}\left(-x\right)} = -  {{\vartheta _{k}}( x ) } $, which reflects the undirected structure of the graph; and
\item there exist constants $0<\underline{\alpha}_{k}\le\overline{\alpha}_{k} < \infty$ such that 
$\underline{\alpha}_{k}  \le \frac{\vartheta _{k}\left( x\right)}{x} \le \overline{\alpha}_{k},\,\forall x\neq 0.$
\end{enumerate}
\end{assumption}
\vskip2mm
Let $\overline{\Lambda}=\operatorname{diag}\{\overline{\alpha}_{1},\dots,\overline{\alpha}_{p}\}$ and $\underline{\Lambda}=\operatorname{diag}\{\underline{\alpha}_{1},\dots,\underline{\alpha}_{p}\}$.

\begin{assumption}\label{def: MI_OFP}
For each edge $e_k=(i,j)$, the operators $G_i,G_j:\Ltwoe \to \Ltwoe$ satisfy
\begin{align} \label{eq: MI_OFP}
&\left\langle u - v,\; G_i u - G_j v \right\rangle_T \nonumber\\
&\quad \ge \nu_{k}\!\left( \|u\|_T^2 + \|v\|_T^2 \right)
+ \gamma_k\|G_i u - G_j v\|_T^2 + \beta_k,
\end{align}
for some constants $\nu_{k}\le 0$, $\gamma_{k}\in\mathbb{R}$, and $\beta_{k}\in\mathbb{R}$, and for all $u,v \in \Ltwoe$ and $T \ge 0$.
\end{assumption}

\begin{figure}[ht]
\centering
\begin{tikzpicture}[scale=0.62,every node/.style={transform shape},
  >=Latex,
  block/.style={draw, thick, minimum width=1.0cm, minimum height=1.0cm},
  bigblock/.style={draw, thick, minimum width=3.0cm, minimum height=3cm},
  sum/.style={draw, circle, thick, minimum size=5mm, inner sep=0pt},
  lab/.style={font=\normalsize},
  tinybox/.style={draw, thick, minimum width=0.95cm, minimum height=0.75cm, inner sep=0pt},
  line/.style={thick}
]

\def\gapLeft{1.2cm}   
\def\gapRight{1.6cm}  
\def\vgap{0.5cm}      
\def\dvert{4mm}       

\node[bigblock] (Hblk) {};

\node[bigblock, below=\vgap of Hblk] (Tblk) {};

\node[sum, left=\gapLeft of Hblk]  (sumU) {};
\node[sum, right=\gapRight of Tblk] (sumL) {};

\node[block, below=\dvert of sumU] (D) {$D$};

\node[block] (DT) at ($(sumL |- D)$) {$D^{\mathsf T}$};

\node at ($ (Tblk.south west)!0.25!(Tblk.west) + (-10pt,0) $) {\Large{$\Theta$}};


\path let
  \p1 = (Hblk.north west),
  \p2 = (Hblk.south east),
  \n1 = {(\x2-\x1)/4},   
  \n2 = {(\y1-\y2)/4}    
in
  node[tinybox, anchor=north west, minimum width=\n1, minimum height=\n2] (H1)
       at (Hblk.north west) {$G_{1}$}

  node[tinybox, anchor=north west, minimum width=\n1, minimum height=\n2] (H2)
       at ($ (Hblk.north west) + (\n1,-\n2) $) {$G_{2}$}

  node[tinybox, anchor=north west, minimum width=\n1-1, minimum height=\n2-1] (Hdd)
       at ($ (Hblk.north west) + (2*\n1,-2*\n2) $) {$\ddots$}

  node[tinybox, anchor=south east, minimum width=\n1, minimum height=\n2] (Hn)
       at (Hblk.south east) {$G_{n}$};


\path let
  \p3 = (Tblk.north west),
  \p4 = (Tblk.south east),
  \n3 = {(\x4-\x3)/4},   
  \n4 = {(\y3-\y4)/4}    
in
  node[tinybox, anchor=north west, minimum width=\n3, minimum height=\n4] (L1)
       at (Tblk.north west) {$\vartheta_{1}$}

  node[tinybox, anchor=north west, minimum width=\n3, minimum height=\n4] (L2)
       at ($ (Tblk.north west) + (\n3,-\n4) $) {$\vartheta_{2}$}

  node[tinybox, anchor=north west, minimum width=\n3-1, minimum height=\n4-1] (Ldd)
       at ($ (Tblk.north west) + (2*\n3,-2*\n4) $) {$\ddots$}

  node[tinybox, anchor=south east, minimum width=\n3, minimum height=\n4] (Ln)
       at (Tblk.south east) {$\vartheta_{p}$};

\draw[line, ->] (D.north) -- (sumU.south);

\node[lab] at ($(sumU.south)+(0.35,-0.35)$) {$-$};

\draw[line, ->] (sumU.east) -- ($(Hblk.west)+(0,0)$) node[pos=0.3, above, lab] {$U$};

\draw[line, ->] (Hblk.east) -| node[pos=0.3, above, lab]{$Y$} (DT.north);

\draw[line, <-] (sumL.east) -- ++(1,0) node[pos=0.7, above, lab] {$W$};

\draw[line, ->]   (sumL.west) -- node[pos=0.5, above, lab] {$X$} (Tblk.east);

\draw[line,->] (Tblk.west) 
  -| 
  node[pos=0.2, above, lab] {$V$}
  (D.south);
\draw[line, ->] (DT.south) --node[pos=0.4, left, lab] {} (sumL.north);

\node[draw, thick, dashed, fit={(Hblk) (D) (DT)}, inner sep=3pt] (Hgroup) {};


\end{tikzpicture}
\vskip-0.6cm
\caption{Block diagram of the network described by \eqref{eq: system model} \& \eqref{eq: input}.}
\label{fig.network}
\vskip-0.8cm
\end{figure} 



The operators $G_i$ and $G_j$ are said to be $(\nu_{ij},\gamma_{ij})$-\emph{generalised mutually incremental output-feedback passive} if \eqref{eq: MI_OFP} holds \cite{su2025output}. 
This condition characterises a relaxed relative passivity relation between pairs of agents in the network and accommodates heterogeneous nonlinear dynamics. 
In particular, when $\nu_k=0$, the condition reduces to a \emph{mutually relaxed cocoercivity} relation between $G_i$ and $G_j$ (cf. \cite{alekajbaf2015input}). 
Moreover, when $\nu_k=0$ and $G_i = G_j$, it further reduces to \emph{incremental output-feedback passivity} property (cf. \cite{van2000l2}). 
As shown in \cite{su2025output}, this condition holds for several classes of nonlinear systems. Additionally, 
when the operators $G_i$ and $G_j$ correspond to linear time-invariant (LTI) systems, condition \eqref{eq: MI_OFP} admits an equivalent characterisation in terms of linear matrix inequalities (LMIs); see \cite{su2025output} for details.

\section{Main Results}\label{sec: Main Results}


This section establishes the distributed conditions that guarantee synchronisation of the network shown in Fig.~\ref{fig.network}. To this end, we first present a technical lemma that reformulates the property in Assumption~\ref{def: MI_OFP} as a dissipativity condition with respect to the signal pair $(V, D^\top Y)$, capturing the open-loop operator mapping $V$ to $D^\top Y$ in the dashed box in Fig.~\ref{fig.network}.

\begin{lemma}\label{lem: U,Y}
Consider the network described by \eqref{eq: system model} and \eqref{eq: input}. If the Assumption~\ref{def: MI_OFP} holds, then the following inequality is satisfied:
\begin{align*}
&{\left\langle {-V,\left( {2I + \Phi_{\mathcal{G}}} \right){D^ \top }Y} \right\rangle _T}\\
\ge& {\left\langle {{D^ \top }Y,\left({\Gamma  - \bar \Phi_{\mathcal{G}}}\right){D^ \top }Y} \right\rangle _T}  + {\left\langle {V,\left( {{D^ \top }\Xi D - \bar \Phi_{\mathcal{G}}} \right)V} \right\rangle _T} + \bar \beta,   
\end{align*}
where $\bar \beta  =\sum\nolimits_{k = 1}^p {{\beta _k}} $, 
$\Gamma  =\mathrm{diag}\left\{ {{\gamma_1}, \dots, {\gamma_p}} \right\}$, and $\Xi =\mathrm{diag}\left\{ {{\tilde\nu _1}, \dots, {\tilde\nu _n}} \right\}$ with $\tilde\nu_i = \sum\nolimits_{e_k \in {\mathcal{E}_i}} {{\nu _{k}}}$.
\end{lemma}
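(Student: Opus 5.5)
The plan is to apply Assumption~\ref{def: MI_OFP} edge by edge with $u=u_i$, $v=u_j$, so that $G_iu=y_i$ and $G_jv=y_j$. I then sum over all edges $e_k=(i,j)\in\mathcal{E}$, using one orientation per undirected edge so that the sum has $p$ terms matching the columns of $D$. Summing the right-hand sides gives three pieces:
\begin{itemize}
\item $\sum_k\gamma_k\|y_i-y_j\|_T^2=\langle D^\top Y,\Gamma D^\top Y\rangle_T$;
\item $\sum_k\beta_k=\bar\beta$;
\item $\sum_k\nu_k(\|u_i\|_T^2+\|u_j\|_T^2)=\sum_i\tilde\nu_i\|u_i\|_T^2=\langle U,\Xi U\rangle_T$, because each $\|u_i\|_T^2$ is collected once for every edge in $\mathcal{E}_i$.
\end{itemize}
Since $U=-DV$ by \eqref{eq: iuput vector}, the last piece equals $\langle V,D^\top\Xi DV\rangle_T$. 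This already produces the $\Gamma$, $D^\top\Xi D$ and $\bar\beta$ terms of the claimed inequality. Summing the left-hand sides gives $\langle D^\top U,D^\top Y\rangle_T=\langle -D^\top D V,D^\top Y\rangle_T$.

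Next I decompose the edge Laplacian as $D^\top D=2I+L_{\rm off}$. Here $(L_{\rm off})_{kl}=\pm1$ exactly when the distinct edges $e_k,e_l$ share a node, and the sign is fixed by the orientation. The diagonal part gives $\langle -V,2D^\top Y\rangle_T$. The off-diagonal part, $-\langle L_{\rm off}V,D^\top Y\rangle_T$, I regroup by fixing an edge $e_l=(i,m)$ and collecting every term that multiplies $v_l$. The sign convention should make each such term $v_l$ times a difference $y_i-y_q$ or $y_m-y_q$ along a neighbouring edge; this sign bookkeeping has to be checked carefully.

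The neighbouring edges of $e_l$ then split into two classes.
\begin{itemize}
\item \emph{Common neighbours.} For each $q\in\mathcal{N}_{im}$ (a triangle), the two contributions combine as $(y_i-y_q)-(y_m-y_q)=y_i-y_m=(D^\top Y)_l$. Together they give exactly $\tilde r_l\, v_l (D^\top Y)_l$, which, moved to the left, produces the $\Phi_{\mathcal{G}}$ in $\langle -V,(2I+\Phi_{\mathcal{G}})D^\top Y\rangle_T$.
\item \emph{Exclusive neighbours.} Each $q\in\mathcal{N}_i^{\,m}\cup\mathcal{N}_m^{\,i}$ leaves an isolated cross term $\pm\langle v_l,(D^\top Y)_{k}\rangle_T$, where $e_k$ is the edge to $q$. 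I bound it with Young's inequality, $|\langle a,b\rangle_T|\le\tfrac12(\|a\|_T^2+\|b\|_T^2)$.
\end{itemize}

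The main step, and the likeliest obstacle, is the counting after Young's inequality. Edge $e_l$ has exactly $\bar r_l$ exclusive cross terms, so $\|v_l\|_T^2$ appears with total weight $\tfrac12\bar r_l$, giving $\langle V,\bar\Phi_{\mathcal{G}}V\rangle_T$. For the $Y$ side, I must verify that $\|(D^\top Y)_k\|_T^2$ also appears exactly $\bar r_k$ times. This should follow from the symmetry of the relation ``$e_k$ and $e_l$ share exactly one endpoint and are not two sides of a triangle''. With that symmetry the bound gives $\langle D^\top Y,\bar\Phi_{\mathcal{G}}D^\top Y\rangle_T$.

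Since these Young bounds are lower bounds on the moved terms, they enter as $-\bar\Phi_{\mathcal{G}}$ on both quadratic forms. Rearranging then yields exactly the stated inequality. I would handle the orientation signs by a short case check on whether the shared node is the head or the tail of each of the two edges, confirming that the triangle terms always combine into $+(D^\top Y)_l$ with the sign claimed.
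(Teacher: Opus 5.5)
Your proposal is correct and follows the paper's proof. You sum the pairwise inequality of Assumption~\ref{def: MI_OFP} over the edges using $U=-DV$, and expand $D^\top DD^\top Y$ edgewise: you do this via $D^\top D=2I+L_{\rm off}$, the paper via $\langle U,Y\rangle_T=\langle D^\top U,D^\top Y\rangle_T-\langle U,(DD^\top-I)Y\rangle_T$, which is the same identity. The triangle terms are collected into $\Phi_{\mathcal{G}}$, and the exclusive cross terms are bounded by Young's inequality with the same $\bar r_k$ count.

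The sign check you defer goes through. For a shared node $s$ one has $(L_{\rm off})_{kl}(D^\top Y)_k=d_{sl}(y_s-y_q)$ regardless of the orientation of $e_k$. Your symmetric-relation argument for the $\bar r_k$ count on the $Y$ side is a cleaner justification of the counting identity that the paper simply asserts.
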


\begin{proof}
From Assumption~\ref{def: MI_OFP}, it follows that
\begin{align}\label{eq: thm2_DU,DY}
&\quad{\left\langle {{D^\top}U,{D^\top}Y} \right\rangle _T} 
=\sum_{(i,j) \in {\mathcal{E}}} {{{\left\langle {{y_i} - {y_j},{u_i} - {u_j}} \right\rangle }_T}} \nonumber\\
&\ge\sum_{e_k=(i,j) \in \mathcal{E}} {{\gamma _k}\left\| {{y_i} - {y_j}} \right\|_T^2}  + \sum\limits_{e_k=(i,j) \in \mathcal{E}} {{\beta _k}} \nonumber\\
&\quad+\sum\limits_{e_k=(i,j) \in \mathcal{E}} {{\nu _k}\left( {\left\| {{u_i}} \right\|_T^2 + \left\| {{u_j}} \right\|_T^2} \right)} \nonumber\\
&= {\left\langle {{D^ \top }Y,\Gamma {D^ \top }Y} \right\rangle _T} + {\left\langle {U,\Xi U} \right\rangle _T} + \bar \beta\nonumber\\
&= {\left\langle {{D^ \top }Y,\Gamma {D^ \top }Y} \right\rangle _T} + {\left\langle {V,D^\top \Xi DV} \right\rangle _T} + \bar \beta.
\end{align}
Let $y_{ij}:=y_i-y_j$. 
Since $(DD^\top Y)_i=(LY)_i = \sum\nolimits_{l \in {\mathcal{N}_i}} {{y_{il}}}$, the following identity holds for any $e_k=(i,j) \in \mathcal{E}$:
\begin{align*}
&(D^\top DD^\top Y)_k =
\sum\limits_{l \in {\mathcal{N}_i}} {y_{il}}  - \sum\limits_{l \in {\mathcal{N}_j}} {y_{jl}} \\
& = {y_{ij}} - {y_{ji}} + \sum\limits_{l \in {{\mathcal{N}}_{ij}}} {{y_{il}}}  - \sum\limits_{l \in {{\mathcal{N}}_{ij}}} {{y_{jl}}}+\sum\limits_{l\in\mathcal{N}_i^{ j}} {y_{il}}  - \sum\limits_{l\in\mathcal{N}_j^{ i}} {y_{jl}}\\
&= \left( {2 + { r_{ij}}} \right) y_{ij}+\sum\limits_{l\in\mathcal{N}_i^{ j}} {y_{il}}  - \sum\limits_{l\in\mathcal{N}_j^{ i}} {y_{jl}}.
\end{align*}
Consequently,
\begin{align}\label{eq: thm2_U,DD-IY_1}
&- {\left\langle {U,\left( {D{D^ \top } - I} \right)Y} \right\rangle _T}
= {\left\langle {V,{D^ \top }\left( {D{D^ \top } - I} \right)Y} \right\rangle _T}\nonumber\\
& =\sum\limits_{e_k=(i,j) \in {\mathcal{E}}} {{{\left\langle {v_k,- y_{ij} + \sum\limits_{l \in {\mathcal{N}_i}} {y_{il}}  - \sum\limits_{l \in {\mathcal{N}_j}} {y_{jl}} } \right\rangle }_T}}\nonumber\\
&=\sum\limits_{e_k=(i,j) \in \mathcal{E}} {{{\left\langle {v_k,\sum\limits_{l\in\mathcal{N}_i^{ j}} {y_{il}}  - \sum\limits_{l\in\mathcal{N}_j^{ i}} {y_{jl}} } \right\rangle }_T}}\nonumber\\
&\quad +\sum\limits_{e_k=(i,j) \in \mathcal{E}} {\left( {1 + {r_{ij}}} \right){{\left\langle {v_k,y_{ij}} \right\rangle }_T}}
\end{align}
For each $e_k=(i,j) \in \mathcal{E}$, 
it holds from $\bar r_{ij}
\;=\;
|\mathcal{N}_i^{\,j}| + |\mathcal{N}_j^{\,i}|$ and Young's inequality that
\begin{align*}
&{{\left\langle {v_k,\sum\limits_{l\in\mathcal{N}_i^{ j}} {y_{il}}  - \sum\limits_{l\in\mathcal{N}_j^{ i}} {y_{jl}} } \right\rangle }_T}\nonumber\\ 
&\ge  -\frac{\bar r_{ij}}{2}\left\| {v_k} \right\|_T^2-\frac{1}{2}\sum\limits_{l\in\mathcal{N}_i^{ j}} \left\|{y_{il}}\right\|_T^2  - \frac{1}{2}\sum\limits_{l\in\mathcal{N}_j^{ i}} \left\|{y_{jl}}\right\|_T^2.
\end{align*}
Furthermore, we have
\begin{align*}
&\sum\limits_{(i,j) \in \mathcal{E}} {\left( {\sum\limits_{l\in\mathcal{N}_i^{ j}} \left\|{{y_{il}}}\right\|_T^2  + \sum\limits_{l\in\mathcal{N}_j^{ i}} \left\|{{y_{jl}}}\right\|_T^2 } \right)} = \sum\limits_{(i,j) \in \mathcal{E}} {{{\bar r}_{ij}}\left\|{y_{ij}}\right\|_T^2}.
\end{align*}
Using the above two equations to lower bound  $\sum_{e_k=(i,j) \in \mathcal{E}} {{{\left\langle {v_k,\sum_{l\in\mathcal{N}_i^{ j}} {y_{il}}  - \sum_{l\in\mathcal{N}_j^{ i}} {y_{jl}} } \right\rangle }_T}}$, we obtain  from \eqref{eq: thm2_U,DD-IY_1}  that
\begin{align}\label{eq: thm2_U,DD-IY}
&- {\left\langle {U,\left( {D{D^ \top } - I} \right)Y} \right\rangle _T}
\nonumber\\
\ge & - \sum\limits_{e_k=(i,j) \in \mathcal{E}} {\left(\frac{\bar r_{ij}}{2}\left\| {v_k} \right\|_T^2+\frac{1}{2}\sum\limits_{l\in\mathcal{N}_i^{ j}} \left\|{y_{il}}\right\|_T^2  + \frac{1}{2}\sum\limits_{l\in\mathcal{N}_j^{ i}} \left\|{y_{jl}}\right\|_T^2\right)} \nonumber\\
& +\sum\limits_{e_k=(i,j) \in \mathcal{E}} {\left( {1 + {r_{ij}}} \right){{\left\langle {v_k,y_{ij}} \right\rangle }_T}} \nonumber\\
=& \sum\limits_{{e_k}=(i,j) \in \mathcal{E}} {\left( {\left( {1 + {{ r}_{ij}}} \right){{\left\langle {{v_k},{ y_{ij}}} \right\rangle }_T} - \frac{{{{\bar r}_{ij}}}}{2}\left\| {{v_k}} \right\|_T^2 - \frac{{{{\bar r}_{ij}}}}{2}\left\| {{\tilde y_k}} \right\|_T^2} \right)} \nonumber\\
=& {\left\langle {V,\left( {I + \Phi_{\mathcal{G}}} \right){D^ \top }Y} \right\rangle _T} - {\left\langle {V,\bar \Phi_{\mathcal{G}} V} \right\rangle _T} - {\left\langle {{D^ \top }Y,\bar \Phi_{\mathcal{G}}{D^ \top }Y} \right\rangle _T}.
\end{align}
Summing up \eqref{eq: thm2_DU,DY} and \eqref{eq: thm2_U,DD-IY} yields that
\begin{align*}
&{\left\langle {-V,D^{\top} Y} \right\rangle _T}={\left\langle {U,Y} \right\rangle _T}\\
= &{\left\langle {{D^ \top }U,{D^ \top }Y} \right\rangle _T} - {\left\langle {U,\left( {D{D^ \top } - I} \right)Y} \right\rangle _T}\nonumber\\
\ge &{\left\langle {{D^ \top }Y,\left({\Gamma  - \bar \Phi_{\mathcal{G}}}\right){D^ \top }Y} \right\rangle _T}+ {\left\langle {V,\left( {I + \Phi_{\mathcal{G}}} \right){D^ \top }Y} \right\rangle _T} \nonumber\\
&\quad + {\left\langle {V,\left( {{D^ \top }\Xi D - \bar \Phi_{\mathcal{G}}} \right)V} \right\rangle _T} + \bar \beta,
\end{align*}
which leads to
\begin{align*}
&{\left\langle {-V,\left( {2I + \Phi_{\mathcal{G}}} \right){D^ \top }Y} \right\rangle _T}\ge\\
& {\left\langle {{D^ \top }Y,\left({\Gamma  - \bar \Phi_{\mathcal{G}}}\right){D^ \top }Y} \right\rangle _T}  + {\left\langle {V,\left( {{D^ \top }\Xi D - \bar \Phi_{\mathcal{G}}} \right)V} \right\rangle _T} + \bar \beta.
\end{align*}
\end{proof}

\begin{remark}
Lemma \ref{lem: U,Y} converts the pairwise dissipativity conditions satisfied by adjacent agents into a network-level dissipativity inequality expressed in terms of the signal pair $(V, D^\top Y)$ by exploiting the graph structure. This inequality characterises the dissipativity property of the  operator mapping $V$ to $D^\top Y$ as shown in the dashed box in Figure \ref{fig.network}. In particular, $D^\top Y$ represents the relative outputs between neighbouring agents, while $V$ collects the outputs of the nonlinear coupling operators associated with each edge of the network. Hence, the resulting inequality is formulated entirely in terms of edge variables. In combination with the dissipativity of the other subsystem in Fig.~\ref{fig.network}, namely $\mathrm{diag}\{\vartheta_1,\ldots,\vartheta_p\}$, Theorem~\ref{thm: consensus condition} applies the dissipativity theorem to establish stability of the feedback interconnection and thus ensure synchronisation.
\end{remark}

\begin{theorem}\label{thm: consensus condition}
Consider the network described by \eqref{eq: system model} and \eqref{eq: input}. Suppose the Assumptions~\ref{assum: coupling} and~\ref{def: MI_OFP} hold and  the following condition holds for all edges $e_k=(i, j) \in \mathcal{E}$:
\[
   \frac{2+\tilde{r}_{k}}{\overline{\alpha}_{k}} -\frac{{  (1 + \underline{\alpha} _k^2){{\bar r}_k}}}{{2\underline{\alpha} _k^2}}+\frac{{{\gamma _{k}}}}{{{\underline{\alpha}_{k} ^2}}}-r_i |\tilde\nu_i| - r_j|\tilde\nu_j| > 0,
\]
where $\tilde\nu_i = \sum\nolimits_{e_k \in {\mathcal{E}_i}} {\nu _k}$. Then, there exist a finite gain $\rho > 0$ and a constant $\epsilon \ge 0$ such that 
\begin{align}\label{eq:aim}
{\left\| {D^{\top} Y} \right\|_T}\le \rho{\left\| { W} \right\|_T}+\epsilon,\,\forall {W } \in \Ltwoe^p,\,\forall T \ge 0.
\end{align}
\end{theorem}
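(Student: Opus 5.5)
The plan is to close the loop between the edge-level dissipativity inequality of Lemma~\ref{lem: U,Y} and a sector inequality for the coupling block $\Theta$, using a multiplier, and to reduce the remaining coercivity requirement to the positive definiteness test of Proposition~\ref{prop: positive definite}. Write $Z:=D^\top Y$ and $X:=Z+W$, so that $V=\Theta(X)$.

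\textbf{Step 1 (sector inequalities).} By Assumption~\ref{assum: coupling}, each component satisfies $(v_k-\underline{\alpha}_k x_k)(\overline{\alpha}_k x_k-v_k)\ge 0$. From this I will extract two scalar consequences:
\[
v_kx_k\ge \frac{1}{\overline{\alpha}_k}v_k^2, \qquad v_k^2\ge \underline{\alpha}_k^2x_k^2 .
\]
Multiplying the first by the edge weight $2+\tilde r_k$, i.e.\ using the multiplier $2I+\Phi_{\mathcal{G}}$, and integrating gives
\[
\langle V,(2I+\Phi_{\mathcal{G}})X\rangle_T\ge \langle V,(2I+\Phi_{\mathcal{G}})\overline{\Lambda}^{-1}V\rangle_T .
\]

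\textbf{Step 2 (combine with the lemma).} Since $X=Z+W$, adding the inequality of Step~1 to the one in Lemma~\ref{lem: U,Y} cancels the cross term $\langle V,(2I+\Phi_{\mathcal{G}})Z\rangle_T$ and leaves
\begin{align*}
\langle V,(2I+\Phi_{\mathcal{G}})W\rangle_T \ge{}& \langle V,\big((2I+\Phi_{\mathcal{G}})\overline{\Lambda}^{-1}+D^\top\Xi D-\bar\Phi_{\mathcal{G}}\big)V\rangle_T\\
&+\langle Z,(\Gamma-\bar\Phi_{\mathcal{G}})Z\rangle_T+\bar\beta .
\end{align*}
The $Z$-term has an indefinite sign. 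I will handle it by converting it into a $V$-term. Pointwise, $v_k^2\ge\underline{\alpha}_k^2(z_k+w_k)^2$, so for any $\delta>0$,
\[
z_k^2\le \frac{1+\delta}{\underline{\alpha}_k^2}v_k^2+c_\delta w_k^2 .
\]
Applied on edges with $\gamma_k-\bar r_k/2<0$, this bounds $\langle Z,(\Gamma-\bar\Phi_{\mathcal{G}})Z\rangle_T$ from below by $\langle V,\underline{\Lambda}^{-2}(\Gamma-\bar\Phi_{\mathcal{G}})V\rangle_T$ up to $\delta$- and $W$-terms. On edges where that coefficient is nonnegative the term can simply be dropped; the stated condition, which contains $\gamma_k/\underline{\alpha}_k^2$, presumes the conversion is used uniformly. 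I expect the intended bookkeeping to be $-\bar r_k/2\cdot(1+\underline{\alpha}_k^2)/\underline{\alpha}_k^2$, combining the $-\bar\Phi$ acting on $V$ and the $-\bar\Phi$ converted from $Z$. This yields a quadratic form $\langle V,(M_0-\delta\cdot)V\rangle_T$ with
\[
M_0=D^\top\Xi D+\Sigma,\qquad \sigma_k=\frac{2+\tilde r_k}{\overline{\alpha}_k}-\frac{(1+\underline{\alpha}_k^2)\bar r_k}{2\underline{\alpha}_k^2}+\frac{\gamma_k}{\underline{\alpha}_k^2}.
\]

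\textbf{Step 3 (positivity via Proposition~\ref{prop: positive definite}).} Take $\Omega=\Xi$, so $\mu_i=\tilde\nu_i\le 0$. The test of Proposition~\ref{prop: positive definite} then reads
\[
\sigma_k+\tilde\nu_i+\tilde\nu_j-(r_i-1)|\tilde\nu_i|-(r_j-1)|\tilde\nu_j| = \sigma_k-r_i|\tilde\nu_i|-r_j|\tilde\nu_j|>0,
\]
which is exactly the hypothesis of the theorem. Hence $M_0\succ0$, and for $\delta$ small enough $M_0\succeq \eta I$ for some $\eta>0$.

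\textbf{Step 4 (finite gain).} Step~2 now gives
\[
\eta\|V\|_T^2\le c_1\|V\|_T\|W\|_T+c_2\|W\|_T^2+|\bar\beta| .
\]
Completing the square yields $\|V\|_T\le\rho_1\|W\|_T+\epsilon_1$. Finally,
\[
\|Z\|_T\le \|X\|_T+\|W\|_T\le \|\underline{\Lambda}^{-1}\|\,\|V\|_T+\|W\|_T,
\]
which gives \eqref{eq:aim}.

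\textbf{Main obstacle.} The delicate step is Step~2, the sign handling of the $\Gamma-\bar\Phi_{\mathcal{G}}$ term. The inequality $v_k^2\ge\underline{\alpha}_k^2x_k^2$ only lets me bound $\|z_k\|^2$ from above, which is what a negative coefficient needs. A positive $\gamma_k$, however, enters the condition as a favourable term $\gamma_k/\underline{\alpha}_k^2$, which would require a lower bound on $z_k^2$ in terms of $v_k^2$. My first attempt would be to use $v_k^2\le\overline{\alpha}_k^2x_k^2$ there. If that does not reproduce the exact constant, I would treat the condition as sufficient after verifying it edgewise in each sign case.
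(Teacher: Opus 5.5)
Your Steps 1--3 follow the paper: the same multiplier $2I+\Phi_{\mathcal{G}}$ on the sector inequality, the same sum with Lemma~\ref{lem: U,Y}, and the same matrix certified by Proposition~\ref{prop: positive definite} with $\Omega=\Xi$. Your $M_0$ is the paper's $Q=\Psi+\underline{\Lambda}^{-1}(\Gamma-\bar\Phi_{\mathcal{G}})\underline{\Lambda}^{-1}$. You differ in the direction in which the indefinite term is absorbed. The paper writes $V(t)=H(t)X(t)$ with a measurable diagonal gain $\underline{\Lambda}\preceq H(t)\preceq\overline{\Lambda}$. It then rewrites $\langle V,\Psi V\rangle_T$ in $D^\top Y$ and $W$ via the time-varying matrices $M(t)=H\Psi H$ and $N(t)=M(t)+\Gamma-\bar\Phi_{\mathcal{G}}$, and uses uniform eigenvalue bounds on both to get coercivity in $\|D^\top Y\|_T$ directly. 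You instead push $Z$ into $V$ with $|x_k|\le|v_k|/\underline{\alpha}_k$ and a $(1+\delta)$ Young split, get coercivity in $\|V\|_T$, and return to $Z$ through the lower sector bound. This avoids the measurable selection and the time-varying matrices. Both arguments rely on exactly $\Gamma-\bar\Phi_{\mathcal{G}}\preceq 0$. So yours is a complete proof whenever $\gamma_k\le\bar r_k/2$ on every edge, which is precisely where the paper's argument is valid too.

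The case you leave open cannot be closed, so drop the fallback of verifying the stated condition edgewise. The paper covers it with \emph{without loss of generality $\gamma_k\le 0$, since \eqref{eq: MI_OFP} still holds with $\gamma_k$ replaced by $0$}. That substitution decreases the left-hand side of the hypothesis, so the hypothesis is not inherited. Worse, for $\gamma_k>\bar r_k/2$ the stated condition gets easier to satisfy as $\underline{\alpha}_k$ decreases, i.e.\ as less is assumed about $\vartheta_k$, and the theorem is in fact false there.

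Here is a counterexample. Take two nodes and one edge ($r_1=r_2=1$, $\tilde r=\bar r=0$) and identical agents $\dot x_i=-x_i+u_i$, $y_i=\frac{1}{10}(x_i-u_i)$ started at rest, i.e.\ $G(s)=-\frac{s}{10(s+1)}$. Let $\vartheta(x)=10x$ with $\underline{\alpha}=1$, $\overline{\alpha}=10$.
\begin{itemize}
\item Since $\mathrm{Re}\,G(j\omega)-|G(j\omega)|^2+\frac14=\frac14-\frac{11}{100}\frac{\omega^2}{1+\omega^2}>0$, truncation plus Parseval and $\|u\|_T^2+\|v\|_T^2\ge\frac12\|u-v\|_T^2$ give Assumption~\ref{def: MI_OFP} with $\nu=-\frac12$, $\gamma=1$, $\beta=0$.
\item The hypothesis then reads $\frac{2}{10}+1-\frac12-\frac12=\frac15>0$.
\item Yet with $\xi:=x_1-x_2$ the loop reduces to $y_1-y_2=-\frac{1}{10}\xi-2w$ and $\dot\xi=\xi+20w$. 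A unit pulse $w$ therefore makes $\|D^\top Y\|_T$ grow exponentially.
\end{itemize}

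Your first idea, using $v_k^2\le\overline{\alpha}_k^2x_k^2$ on edges with $\gamma_k>\bar r_k/2$, is the right repair. It places $(\gamma_k-\bar r_k/2)/\overline{\alpha}_k^2$ in $\sigma_k$ on those edges, which correctly rejects this example since $0.21-1<0$. It also coincides with the stated condition when $\underline{\alpha}_k=\overline{\alpha}_k$, as in the Goodwin case study. Prove that sign-dependent version rather than the theorem as written.
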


\begin{proof}
By Assumption~\ref{assum: coupling}, for each $k=1,\dots,p$,
${\vartheta _k}\left( {x(t)} \right)^2 \le {\overline{\alpha}_k}x(t){\vartheta _k}\left( {x(t)} \right)$.
Integrating over $[0,T]$ yields
$$\overline{\alpha}_k^{-1}\left\| {{{\vartheta} _k}\left( x \right)} \right\|_T^2 \le {\left\langle {x,{{\vartheta} _k}\left( x \right)} \right\rangle _T},\,\forall x\in\Ltwoe$$
In terms of the diagonal operator $\Theta$, it gives
$${\left\langle {X,\Theta (X)} \right\rangle _T} \ge {\left\langle {\Theta (X),\overline{\Lambda}^{-1} \Theta (X)} \right\rangle _T},\,\forall X\in\Ltwoe^p$$
where $\overline{\Lambda}$ is defined after Assumption~\ref{assum: coupling}. Let $X:=D^{\top}Y + W$ and recall that $V=\Theta(X)$. Then, due to the diagonal structure of $I$ and $\Phi_{\mathcal{G}}$, one has
\begin{align}\label{eq: thm2_V,Y}
{\left\langle {V,{\left( {2I + \Phi_{\mathcal{G}} } \right)}X} \right\rangle _T} \ge {\left\langle {V,{\left( {2I + \Phi_{\mathcal{G}} } \right)}\overline{\Lambda}^{-1} V} \right\rangle _T}.
\end{align}
It follows from \eqref{eq: thm2_V,Y} and Lemma~\ref{lem: U,Y} that
\begin{align}\label{eq: thm2_V,W}
&{\left\langle {V,\left( {2I + \Phi_{\mathcal{G}} } \right)W} \right\rangle _T}\nonumber\\
=& {\left\langle {-V, \left( {2I +\Phi_{\mathcal{G}} } \right){D^{\top}}Y} \right\rangle _T} + {\left\langle {V,\left( {2I + \Phi_{\mathcal{G}} } \right)X} \right\rangle _T}\nonumber\\
\ge& {\left\langle {V,\Psi V} \right\rangle _T}+ {\left\langle {{D^ \top }Y,\left( {\Gamma  - \bar \Phi_{\mathcal{G}} } \right){D^ \top }Y} \right\rangle _T} + \bar \beta,
\end{align}
where $\Psi:=D^\top \Xi D-\bar\Phi_G+(2I+\bar\Phi_\mathcal{G})\overline{\Lambda}^{-1}$, $\Xi$, $\Gamma$ and $\bar \beta$ are defined in Lemma~\ref{lem: U,Y}.
Next, for each $e_k=(i,j)\in\mathcal{E}$, there exists measurable scalar function $\eta_k(t) $ such that $\underline{\alpha}_k \le \eta_k(t) \le \overline{\alpha}_k$ and $\eta_k(t) \cdot \left( {{y_i}(t) - {y_j}(t)+w_{ij}(t)} \right)={{\vartheta_k}\left( {{y_i}(t) - {y_j}(t)+w_{ij}(t)} \right)} $ for all $t\ge 0$. Let $H\left(t\right):= \mathrm{diag}\{ {\eta_1}(t), \ldots ,{\eta_p}(t)\}$. Then $V(t)=H(t)X(t)$.
Define $M(t):=H(t)\Psi H(t),\,t\ge 0$.
It follows that
\begin{align*}
&{\left\langle {V,\Psi V} \right\rangle _T}
={\left\langle {{D^ \top }Y + W, M\left( {D^ \top }Y + W\right)} \right\rangle _T}\nonumber\\
&={\left\langle {{D^ \top }Y,M {D^ \top }Y} \right\rangle _T} + 2{\left\langle {{D^ \top }Y,M W} \right\rangle _T} + {\left\langle {W,M W} \right\rangle _T}.
\end{align*}
Substituting the above equation into \eqref{eq: thm2_V,W}, and defining $N(t):= M(t)+\Gamma-\bar\Phi_\mathcal{G}$,
we obtain
\begin{align}\label{eq: thm2_V,W_1}
\quad{\left\langle {V,\left( {2I + \Phi_\mathcal{G} } \right)W} \right\rangle _T}&\ge {\left\langle{{D^{\top}}Y,N{D^{\top}}Y} \right\rangle _T}+ {\left\langle {W,M W} \right\rangle _T}\nonumber\\
&\quad + 2{\left\langle {{D^{\top}}Y,M W} \right\rangle _T}+ \bar \beta
\end{align}
Since when $\gamma_k>0$, inequality \eqref{eq: MI_OFP} still holds with $\gamma_k$ replaced by $0$, we assume without loss of generality that $\gamma_k \le 0$ for $k=1,\dots,p$. As a result, $\Gamma  - \bar \Phi_{\mathcal{G}}\preceq 0$. Now define $Q:=\Psi+\underline{\Lambda}^{-1}(\Gamma-\bar\Phi_\mathcal{G})\underline{\Lambda}^{-1}$ with $\underline{\Lambda}$ is defined after Assumption~\ref{assum: coupling}. Since $\eta_k(t)\ge \underline\alpha_k$ for all $k$ and all $t\ge 0$, we have $H(t)^{-1}-\underline{\Lambda}^{-1}\preceq 0$. Therefore, if $Q\succ0$, then $\Psi+H(t)^{-1}(\Gamma-\bar\Phi_\mathcal{G})H(t)^{-1}\succ 0$,
which implies $N\left( t \right)\succ 0$ for all $t\ge0$. By hypothesis, $\frac{2+\tilde{r}_{k}}{\overline{\alpha}_{k}} -\frac{{  (1 + \underline{\alpha} _k^2){{\bar r}_k}}}{{2\underline{\alpha} _k^2}}+\frac{{{\gamma _{k}}}}{{{\underline{\alpha}_{k} ^2}}}-r_i |\tilde\nu_i| - r_j|\tilde\nu_j| > 0$ for all $e_k=(i, j) \in \mathcal{E}$. Combining this with Proposition \ref{prop: positive definite} as well as the fact that $\nu_k\le0$, we conclude that $Q\succ 0$, and hence $ N(t)\succ0$ for all $t\ge 0$. Since also $\Gamma-\bar\Phi_\mathcal{G}\preceq 0$,
it follows that $ M(t)\succ0$ for all $t\ge 0$.
Define $\underline\mu:=\inf_{t\ge 0}\lambda_{\min}\bigl(N(t)\bigr)>0$ and $\overline\mu:=\sup_{t\ge 0}\lambda_{\max}\bigl(M(t)\bigr)$, where $\lambda_{\min}(\cdot)$ and $\lambda_{\max}(\cdot)$ denote the smallest and largest eigenvalues, respectively. By the Cauchy–Schwartz inequality, it follows from \eqref{eq: thm2_V,W_1} that 
\begin{align*}
{\left\langle {V,\left( {2I + \Phi_\mathcal{G} } \right)W} \right\rangle _T}\ge \underline\mu \left\| {{D^{\top}}Y} \right\|_T^2 - 2\overline\mu{\left\| {{D^{\top}}Y} \right\|_T}{\left\| { W} \right\|_T}+ \bar \beta,
\end{align*}
which leads to
\begin{align}\label{eq: thm_DY}
\underline\mu \left\| {{D^{\top}}Y} \right\|_T^2 \le {\left\langle {V,{\left( {2I +\Phi_\mathcal{G} } \right)}W} \right\rangle _T}
+ 2\overline\mu{\left\| {{D^{\top}}Y} \right\|_T}{\left\| {W} \right\|_T}- \bar \beta.
\end{align}
Let $\overline\phi:= \mathop {\max }\nolimits_{e_k \in \mathcal{E}} {(2+\tilde r_k)}$, $\overline{\alpha} := \mathop {\max }\nolimits_{e_k \in \mathcal{E}}{\overline{\alpha}_k}$. Using Young’s inequality, and by the bound on $\Theta$ induced by Assumption~\ref{assum: coupling},
\begin{align*}
\left\| V \right\|_T^2 \le {\overline{\alpha} ^2}\left\| {{D^{\top}}Y + W} \right\|_T^2 
\le 2{\overline{\alpha} ^2}\left\| {{D^{\top}}Y} \right\|_T^2 + 2{\overline{\alpha}^2}\left\| {W} \right\|_T^2.
\end{align*}
Using these estimates in \eqref{eq: thm_DY}, one has
\begin{align*}
&\quad\underline\mu \left\| {{D^{\top}}Y} \right\|_T^2 \\
&\le \frac{{\underline \mu }}{{8{\overline{\alpha} ^2}}}\left\| V \right\|_T^2 + \frac{{2{\overline{\alpha} ^2}}}{{\underline\mu }}\left\| {{\left( {2I + \Phi_\mathcal{G} } \right)}W} \right\|_T^2\\
&\quad+ \frac{{\underline\mu }}{4}\left\| {{D^{\top}}Y} \right\|_T^2 + \frac{{4\overline\mu^2}}{{\underline\mu }}\left\| {W} \right\|_T^2 - \bar \beta \nonumber\\
& \le \frac{{\underline \mu }}{2}\left\| {{D^{\top}}Y} \right\|_T^2 + \left( {\frac{{\underline \mu }}{4} + \frac{{2{\overline{\alpha} ^2}\overline\phi^2 + 4{\bar\mu ^2}}}{{\underline \mu }}} \right)\left\| {W} \right\|_T^2 - \bar \beta, 
\end{align*}
which leads to
\begin{align}\label{eq: thm2_norm Y}
\left\| {{D^{\top}}Y} \right\|_T^2 \le \left( {\frac{1}{2} + \frac{{4{\overline{\alpha} ^2}\overline\phi^2 + 8{\overline\mu ^2}}}{{{{\underline \mu }^2}}}} \right)\left\| {W} \right\|_T^2 - \frac{{2\bar \beta }}{{\underline \mu }}.
\end{align}
Finally, it follows from \eqref{eq: thm2_norm Y} and ${a^2} \pm {b^2} \le  {\left( {\left| a \right| + \left| b \right|} \right)^2}$ that
$${\left\| {{D^{\top}}Y} \right\|_T} \le \rho {\left\| {W} \right\|_T} + \epsilon,\,\forall {W } \in \Ltwoe^p,\,\forall T \ge 0,$$
where $\rho =\sqrt {\frac{1}{2} + \frac{{4{\overline{\alpha} ^2}\overline\phi^2 + 8{\overline\mu ^2}}}{{{{\underline \mu }^2}}}}>0$ and $\epsilon=\sqrt {\frac{{2\left| {\bar \beta } \right|}}{{\underline \mu }}}\ge0$.
\end{proof}

\begin{remark}
The inequality \eqref{eq:aim} characterises an input–output property of the network with respect to the external input $W$. Recalling the definition of the incidence matrix, the term $D^\top Y$ represents the output disagreement between adjacent agents. Hence, ${\left\| {{D^{\top}}Y} \right\|_T}$ measures the mismatch between the output differences of the adjacent agents over the interval $[0,T]$, and \eqref{eq:aim} describes the property that bounded external input $W$ results in bounded  ${\left\| {{D^{\top}}Y} \right\|_T}$ in any interval $[0,T]$. 
In particular, when $W=0$, the inequality reduces to ${\left\| {{D^{\top}}Y} \right\|_T}\le \epsilon$, which indicates that the outputs of the agents approach synchronisation.
\end{remark}


\section{A Case Study: Goodwin Oscillators}\label{sec:  Case Studies}
\begin{figure}[!ht]
\centering
\includegraphics[width=7.5cm]{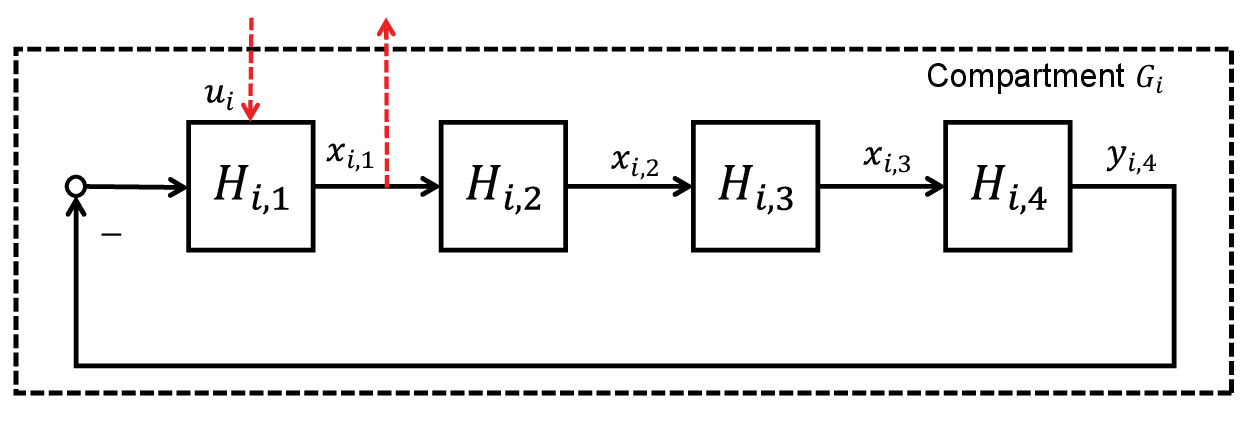}
\caption{Goodwin Oscillator $G_i$.}
\label{fig.Goodwin Oscillators}
\end{figure}
\vskip-3mm
The Goodwin oscillator is a classical model for describing oscillatory behaviour in genetic regulatory networks arising from negative feedback in gene expression. It captures the cyclic interaction between transcription, translation, and metabolite production, where the end product represses its own synthesis through nonlinear inhibition (see, e.g., \cite{stan2007output} and references therein).  Previous studies often consider networks of identical Goodwin oscillators, leading to homogeneous network structures (see, e.g., \cite{stan2007output,scardovi2010synchronization}). In contrast, real biological systems typically exhibit variability in regulatory interactions. Motivated by this, we consider a heterogeneous network of Goodwin oscillators in which the input gains may differ across agents, providing a more realistic representation of biological variability.  Specifically, each oscillator is modelled as a compartment consisting of four cyclically interconnected subsystems:
\begin{align}\label{eq: Goodwin_oscillator}
{G_i}:
\begin{cases}
{{{\dot x}_{i,1}} =  - {a_1}{x_{i,1}} - {y_{i,4}} + {b_{i,1}}{u_i}}\\
{{{\dot x}_{i,2}} =  - {a_2}{x_{i,2}} + {b_2}{x_{i,1}}}\\
{{{\dot x}_{i,3}} =  - {a_3}{x_{i,3}} + {b_3}{x_{i,2}}}\\
{{y_{i,4}} =  - \frac{1}{{x_{i,3}^p + 1}}}
\end{cases} 
\end{align}
for $i \in \{1,2,3,4,5\}$, where $a_1,a_2,a_3,b_2,b_3>0$, the coefficients $b_{i,1}>0$ are heterogeneous input gains, and $p$ is the Hill coefficient characterising the cooperativity of end-product repression. Let $\delta := \tfrac{p(p-1)}{(\sqrt[p-1]{((p-1)/(p+1))^p}+1)^2 (p+1)}$. The compartments are interconnected  through diffusive coupling of the first species via the input $u_i$, as illustrated in Fig.~\ref{fig.Goodwin Oscillators}, where the red dashed lines denote the coupling links. 
We begin by verifying Assumption~\ref{def: MI_OFP}.
\begin{proposition}\label{prop: case study 1}
Given two Goodwin oscillators  $G_i$ and $G_j$ as defined in \eqref{eq: Goodwin_oscillator},
Assumption \ref{def: MI_OFP} holds with ${\nu_k} =  - \frac{{\tilde b_{i,j}^2}}{{2\theta }}$, ${\gamma _k} = { {a_1} - {\theta } - \frac{{{\theta _1}}}{2}-\frac{{{\theta _2}}}{2}}$ where $\tilde b_{i,j}= \max\{\left| {{b_{i,1}} - 1} \right|,\left| {{b_{j,1}} - 1} \right|\}$, $\theta>0$ is arbitrary, and ${\theta _1} = \frac{{{\delta ^2}{\theta _3}}}{{2{a_3}{\theta _3} - b_3^2}}>0$, ${\theta _2} = \frac{{b_2^2}}{{2{a_2} - {\theta _3}}}>0$ with some $\theta_3>0$.
\end{proposition}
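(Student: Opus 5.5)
The plan is to prove the incremental inequality \eqref{eq: MI_OFP} with a quadratic incremental storage function on the error states. I take the output of each oscillator $G_i$ in \eqref{eq: Goodwin_oscillator} to be the coupled first species, $G_i u = x_{i,1}$; the system as written does not state this, so it is an assumption of the argument. Take inputs $u,v\in\Ltwoe$ applied to $G_i$ and $G_j$, respectively. Define the errors $e_1:=x_{i,1}-x_{j,1}$, $e_2:=x_{i,2}-x_{j,2}$, $e_3:=x_{i,3}-x_{j,3}$ and $\Delta y:=y_{i,4}-y_{j,4}$. The candidate storage is
\[
S(e)=\tfrac12 e_1^2+\tfrac{c_2}{2}e_2^2+\tfrac{c_3}{2}e_3^2,
\]
with weights $c_2,c_3>0$ to be fixed later. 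The argument has three steps:
\begin{itemize}
\item derive a pointwise dissipation inequality for $\dot S$;
\item integrate it over $[0,T]$;
\item use $S(T)\ge 0$ and set $\beta_k:=-S(e(0))$, which is finite and depends only on the initial conditions.
\end{itemize}

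\emph{First channel.} The key algebraic step handles the heterogeneous input gains. I write
\[
b_{i,1}u-b_{j,1}v=(u-v)+(b_{i,1}-1)u-(b_{j,1}-1)v.
\]
Then $e_1\dot e_1=-a_1e_1^2-e_1\Delta y+e_1(u-v)+e_1\bigl((b_{i,1}-1)u-(b_{j,1}-1)v\bigr)$. The supply term $e_1(u-v)$ is exactly the one appearing in \eqref{eq: MI_OFP}. For the mismatch term I apply Young's inequality and then $(a+b)^2\le 2a^2+2b^2$:
\[
e_1\bigl((b_{i,1}-1)u-(b_{j,1}-1)v\bigr)\ge -\theta e_1^2-\tfrac{\tilde b_{i,j}^2}{2\theta}(u^2+v^2).
\]
This produces $\nu_k=-\tilde b_{i,j}^2/(2\theta)$ and the $-\theta$ contribution to $\gamma_k$.

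\emph{Nonlinear term.} The map $x\mapsto -1/(x^p+1)$ is globally Lipschitz on $x\ge 0$. Its derivative $px^{p-1}/(x^p+1)^2$ attains its maximum at the inflection point $x^{p}=(p-1)/(p+1)$, and I expect this maximum to reproduce the stated constant $\delta$. Hence $|\Delta y|\le\delta|e_3|$, which requires the concentrations $x_{i,3}$ to stay nonnegative; I would justify this by forward invariance of the nonnegative orthant for the biologically relevant trajectories. Then Young's inequality gives
\[
|e_1\Delta y|\le\tfrac{\theta_1}{2}e_1^2+\tfrac{\delta^2}{2\theta_1}e_3^2 .
\]
The $\tfrac{\theta_1}{2}e_1^2$ piece enters $\gamma_k$, and the $e_3^2$ piece must be absorbed by the $e_3$ storage.

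\emph{Cascade channels.} The chain $e_1\to e_2\to e_3$ is linear and strictly stable. For the third channel, $\tfrac{d}{dt}\tfrac12 e_3^2=-a_3e_3^2+b_3e_2e_3\le -(a_3-\tfrac{b_3^2}{2\theta_3})e_3^2+\tfrac{\theta_3}{2}e_2^2$, and the coefficient $a_3-\tfrac{b_3^2}{2\theta_3}$ is positive precisely when $2a_3\theta_3>b_3^2$. Similarly, $\tfrac{d}{dt}\tfrac12 e_2^2\le -(a_2-\tfrac{\theta_3}{2})e_2^2+\tfrac{b_2^2}{2\theta_2'}e_1^2$-type bounds hold, which need $2a_2>\theta_3$. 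I then choose $c_3$ so that the $e_3^2$ dissipation exactly cancels $\tfrac{\delta^2}{2\theta_1}e_3^2$, and $c_2$ so that the $e_2^2$ dissipation cancels the $e_2^2$ term leaked from the $e_3$ bound. The remaining $e_1^2$ leakage should then be $\tfrac{\theta_2}{2}e_1^2$, with $\theta_1=\delta^2\theta_3/(2a_3\theta_3-b_3^2)$ and $\theta_2=b_2^2/(2a_2-\theta_3)$.

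Collecting everything gives $\dot S\le e_1(u-v)-\gamma_k e_1^2-\nu_k(u^2+v^2)$ with $\gamma_k=a_1-\theta-\tfrac{\theta_1}{2}-\tfrac{\theta_2}{2}$; integrating yields \eqref{eq: MI_OFP}. The main obstacle is this weight bookkeeping. The Young parameters and $c_2,c_3$ must be matched so that the constants come out exactly as stated, rather than merely up to some admissible choice. I would first solve the cancellation conditions symbolically for $c_2,c_3$ and then check that the leftover $e_1^2$ coefficients match $\theta_1/2$ and $\theta_2/2$. A second, smaller point is rigorously confirming the value of $\delta$ and the nonnegativity of $x_{i,3}$.
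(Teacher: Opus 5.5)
Your argument is essentially the paper's. The paper uses the same incremental storage with unit weights, $S=\tfrac12(e_1^2+e_2^2+e_3^2)$, and essentially the same Young estimates. It defines $\theta_1,\theta_2$ exactly so that the $e_2^2$ and $e_3^2$ coefficients vanish, then integrates and sets $\beta_k=-S(e(0))$.

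Your weights $c_2,c_3$ are redundant, and the bookkeeping you flagged does close:
\begin{itemize}
\item With the stated $\theta_1$, the $e_3^2$ cancellation forces $c_3=1$.
\item With the paper's splitting $b_2e_1e_2\le\frac{b_2^2}{2\theta_2}e_2^2+\frac{\theta_2}{2}e_1^2$, you get $c_2=1$.
\item With the splitting you actually wrote, $b_2e_1e_2\le\frac{\theta_3}{2}e_2^2+\frac{b_2^2}{2\theta_3}e_1^2$, you get $c_2=\theta_3/(2a_2-\theta_3)$ and a leftover $c_2b_2^2/(2\theta_3)=\theta_2/2$.
\end{itemize}
Either way $\gamma_k$ comes out as stated.

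Two of your side steps will not go through as planned.

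\textbf{The value of $\delta$.} You are right that $f'(x)=px^{p-1}/(x^p+1)^2$ is maximised at $x^p=\frac{p-1}{p+1}$. But its maximum, $\frac{(p-1)^{(p-1)/p}(p+1)^{(p+1)/p}}{4p}$, does not reproduce the stated $\delta$. The stated formula is $f'$ evaluated at $x^{p-1}=\frac{p-1}{p+1}$, so it is strictly smaller than the maximum; for $p=2$ it gives $0.54$ against about $0.65$. The paper never derives $\delta$: it imports $f'\le\delta$ from \cite{scardovi2010synchronization}. So this is a defect in the stated constant, not in your method. What both arguments actually establish is the proposition with $\delta$ replaced by the true maximal slope. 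For $p=14$ the two values agree to four significant figures, so the conclusion of the numerical example is unaffected.

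\textbf{Nonnegativity of $x_{i,3}$.} You cannot appeal to forward invariance of the nonnegative orthant. Assumption~\ref{def: MI_OFP} quantifies over all $u,v\in\Ltwoe$, and a sufficiently negative input drives $x_{i,1}$ negative, and then $x_{i,2}$ and $x_{i,3}$. For even $p$ you need no sign information: $f'$ is odd, so $|f'(x)|\le\sup_{s\ge0}f'(s)$ on all of $\mathbb{R}$. For odd $p$ the vector field is singular at $x_{i,3}=-1$. Neither your argument nor the paper's, which applies the mean value theorem without comment, handles that case.
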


\begin{proof}
Since $f\left( x \right) =  - \frac{1}{{{x^p} + 1}}$ satisfies $\frac{d}{{dx}}f(x) \le \delta $ \cite[Example A]{scardovi2010synchronization},
then we have
$
\frac{1}{2}\frac{d}{{dt}}{( {{x_{i,1}} - {x_{j,1}}} )^2} 
= -a_1 \left( {{x_{i,1}} - {x_{j,1}}} \right)^2-\left( {{x_{i,1}} - {x_{j,1}}} \right)\left( {{y_{i,4}} - {y_{j,4}}} \right)+ \left( {{x_{i,1}} - {x_{j,1}}} \right)\left( {u_i- u_j} \right)+\left( {{x_{i,1}} - {x_{j,1}}} \right)( {\tilde b_{i,1}u_i- \tilde b_{j,1}u_j} ),
$
where $\tilde b_{i,1}=b_{i,1}-1$ and $\tilde b_{j,1}=b_{j,1}-1$. 
Applying the Mean Value Theorem to $f(x)$, we obtain $\left| {{y_{i,4}} - {y_{j,4}}} \right| = \left| {f({x_{i,3}}) - f({x_{j,3}})} \right| \le \delta \left| {{x_{i,3}} - {x_{j,3}}} \right|$. This, together with Young's inequality, gives
\begin{align*}
\frac{1}{2}\frac{d}{{dt}}{\left( {{x_{i,1}} - {x_{j,1}}} \right)^2} 
\le ( { - {a_1} + {\theta } + \frac{{{\theta _1}}}{2}} )\left( {{x_{i,1}} - {x_{j,1}}} \right)^2+\qquad\nonumber \\
\frac{{{\delta ^2}}}{{2{\theta _1}}}{\left( {{x_{i,3}} - {x_{j,3}}} \right)^2}
+\left( {{x_{i,1}} - {x_{j,1}}} \right)\left( {u_i- u_j} \right)+\frac{{\tilde b_{i,j}^2}}{{2\theta }}\left({u_i^2+u_j^2}\right)
\end{align*}
for any $\theta >0$ and $\theta_1 >0$. 
Similarly, one has
\begin{align*}
&\frac{1}{2}\frac{d}{{dt}}{\left( {{x_{i,2}} - {x_{j,2}}} \right)^2} \nonumber\\
\le & ( {-a_2+\frac{{b_2^2}}{{2{\theta _2}}}} )\left( {{x_{i,2}} - {x_{j,2}}} \right)^2+\frac{{{\theta _2}}}{2}\left( {{x_{i,1}} - {x_{j,1}}} \right)^2
\end{align*}
for any $\theta_2 >0$, and 
\begin{align*}
&\frac{1}{2}\frac{d}{{dt}}{\left( {{x_{i,3}} - {x_{j,3}}} \right)^2} \nonumber\\
\le & ( {-a_3+\frac{{b_3^2}}{{2{\theta _3}}}} )\left( {{x_{i,3}} - {x_{j,3}}} \right)^2+\frac{{{\theta _3}}}{2}\left( {{x_{i,2}} - {x_{j,2}}} \right)^2
\end{align*}
for any $\theta_3 >0$. Combining the above inequalities, we obtain
\begin{align*}
&\frac{1}{2}\frac{d}{{dt}}{\left( {{x_{i,1}} - {x_{j,1}}} \right)^2}+\frac{1}{2}\frac{d}{{dt}}{\left( {{x_{i,2}} - {x_{j,2}}} \right)^2}+\frac{1}{2}\frac{d}{{dt}}{\left( {{x_{i,3}} - {x_{j,3}}} \right)^2} \nonumber\\
\le& ( { - {a_1} + {\theta } + \frac{{{\theta _1}}}{2}+\frac{{{\theta _2}}}{2}} )\left( {{x_{i,1}} - {x_{j,1}}} \right)^2+( {-a_2+\frac{{b_2^2}}{{2{\theta _2}}}+\frac{{{\theta _3}}}{2}} )\\
&\cdot \left( {{x_{i,2}} - {x_{j,2}}} \right)^2+( {-a_3+\frac{{b_3^2}}{{2{\theta _3}}}+\frac{{{\delta ^2}}}{{2{\theta _1}}}} )\left( {{x_{i,3}} - {x_{j,3}}} \right)^2\\
&+\left( {{x_{i,1}} - {x_{j,1}}} \right)\left( {u_i- u_j} \right)+\frac{{\tilde b_{i,j}^2}}{{2\theta }}\left({u_i^2+u_j^2}\right)\nonumber
\end{align*}
for any positive $\theta,\theta_1, \theta_2,\theta_3$. By choosing ${\theta _1} = \frac{{{\delta ^2}{\theta _3}}}{{2{a_3}{\theta _3} - b_3^2}}$ and ${\theta _2} = \frac{{b_2^2}}{{2{a_2} - {\theta _3}}}$, it follows that
$$
-a_2+\frac{{b_2^2}}{{2{\theta _2}}}+\frac{{{\theta _3}}}{2}=0 \text{ and }
-a_3+\frac{{b_3^2}}{{2{\theta _3}}}+\frac{{{\delta ^2}}}{{2{\theta _1}}}=0.
$$
Then we can rewrite the above inequality into 
\begin{align*}
&\frac{1}{2}\frac{d}{{dt}}{\left( {{x_{i,1}} - {x_{j,1}}} \right)^2}+\frac{1}{2}\frac{d}{{dt}}{\left( {{x_{i,2}} - {x_{j,2}}} \right)^2}+\frac{1}{2}\frac{d}{{dt}}{\left( {{x_{i,3}} - {x_{j,3}}} \right)^2} \nonumber\\
\le&( { - {a_1} + {\theta } + \frac{{{\theta _1}}}{2}+\frac{{{\theta _2}}}{2}})\left( {{x_{i,1}} - {x_{j,1}}} \right)^2\nonumber\\
&+\left( {{x_{i,1}} - {x_{j,1}}} \right)\left( {u_i- u_j} \right)+\frac{{\tilde b_{i,j}^2}}{{2\theta }}\left({u_i^2+u_j^2}\right).
\end{align*}
Integrating both sides from time $0$ to any $T>0$, we have
\begin{align*}
&\sum\limits_{q = 1}^3 {( {\frac{1}{2}{{\left( {{x_{i,q}}\left( T \right) - {x_{j,q}}\left( T \right)} \right)}^2} - \frac{1}{2}{{\left( {{x_{i,q}}\left( 0 \right) - {x_{j,q}}\left( 0 \right)} \right)}^2}} )} \\ 
\le &-\gamma_k\left\| {{x_{i,1}} - {x_{j,1}}} \right\|_T^2- \nu_k\left( {\left\| {{u_i}} \right\|_T^2 + \left\| {{u_j}} \right\|_T^2} \right)\\
&+ \left\langle {{u_i} - {u_j},{x_{i,1}} - {x_{j,1}}} \right\rangle_T ,
\end{align*}
which leads to
\begin{align*}
\left\langle {{u_i} - {u_j},{x_{i,1}} - {x_{j,1}}} \right\rangle_T  &\ge {\nu _k}\left( {\left\| {{u_i}} \right\|_T^2 + \left\| {{u_j}} \right\|_T^2} \right) \\
&\quad\quad\quad\quad+ {\gamma _k}\left\| {{x_{i,1}} - {x_{j,1}}} \right\|_T^2+\beta_k,
\end{align*}
where ${\nu _k} =  - \frac{{\tilde b_{i,j}^2}}{{2\theta }}$, ${\gamma _k} = { {a_1} - {\theta } - \frac{{{\theta _1}}}{2}-\frac{{{\theta _2}}}{2}} $, ${\beta _k} = \sum_{q = 1}^3 { - \frac{1}{2}{{\left( {{x_{i,q}}\left( 0 \right) - {x_{j,q}}\left( 0 \right)} \right)}^2}}$.
\end{proof}\par

Now we consider a network of five Goodwin oscillators $G_i,i\in\{1,2,3,4,5\}$ with parameters $a_1=0.5$, $a_2=a_3=1$, $b_2=b_3=1.5$, $p=14$. To introduce heterogeneity in the network, the input gains are chosen as $b_{1,1}=0.8$, $b_{2,1}=0.9$, $b_{3,1}=1$, $b_{4,1}=1.1$ and $b_{5,1}=1.2$. The oscillators are assumed to be interconnected through a complete graph, so that each oscillator is coupled with all the others. The coupling between oscillators is described by the function ${{\vartheta_k}\left( {x} \right)}=5x$, i.e., $\underline{\alpha}_k=\overline{\alpha}_k=5$, for all $e_k=(i,j)\in\mathcal{E}$. We can obtain from Proposition \ref{prop: case study 1} with $\theta=2$ and $\theta_3=1.5$ that Assumption~\ref{def: MI_OFP} holds with $\nu_k= -0.01$, and $\gamma_{k}=-16.1250$ for all $e_k=(i,j)\in\mathcal{E}$. Hence $\tilde\nu_i=-0.04,i\in\mathcal{N}$. Since the oscillators are interconnected through a complete graph, $r_i = 4, i\in\mathcal{N}$, and $\tilde r_k = 3$, which implied $\bar r_k = 0$ for all $e_k=(i,j)\in\mathcal{E}$. Therefore, it can be verified that $\frac{2+\tilde{r}_{k}}{\overline{\alpha}_{k}} -\frac{{  (1 + \underline{\alpha} _k^2){{\bar r}_k}}}{{2\underline{\alpha} _k^2}}+\frac{{{\gamma _{k}}}}{{{\underline{\alpha}_{k} ^2}}}-r_i |\tilde\nu_i| - r_j|\tilde\nu_j| = 0.035>0$ for all edges $e_k=(i, j) \in \mathcal{E}$. By Theorem \ref{thm: consensus condition}, the Goodwin oscillators network achieves IO synchronisation. To verify this, we set the initial value $x_{1,1}(0)=1.1$, $x_{2,1}(0)=-0.2$, $x_{3,1}(0)=1$, $x_{4,1}(0)=0.5$, $x_{5,1}(0)=0.3$ and the external disturbance to be $w_{ij}(t) = 0.3{\bar w_{ij}}(t)$, where ${\bar w_{ij}}(t)$ is white Gaussian noise with ${\bar w_{ij}}(t) \sim \mathcal{N}(0,1)$. As shown in Fig. \ref{fig.example1}
the outputs of the Goodwin oscillators reach synchronisation approximately with ${\left\| {D^{\top} Y} \right\|_T}$ bounded in terms of ${\left\| { W} \right\|_T}$.
\begin{figure}[!ht]
\centering
\includegraphics[width=8.3cm]{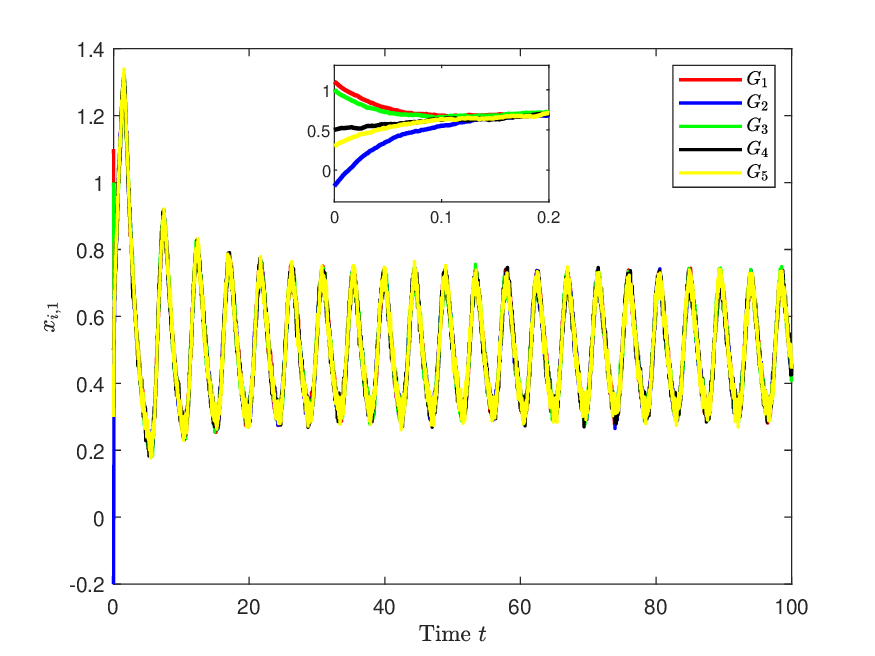}
\caption{Output trajectories of the Goodwin oscillators.}
\label{fig.example1}
\end{figure}

\section{Conclusion}\label{sec: Conclusion}
This work derived distributed conditions for input--output synchronisation in networks of heterogeneous nonlinear systems interconnected through nonlinear diffusive couplings. In particular, the proposed conditions rely solely on local information associated with adjacent agents and coupling links, and exploit relative dissipativity properties between neighbouring agents. 
The proposed approach establishes a network-level dissipativity property by reformulating pairwise dissipativity conditions among adjacent agents in terms of edge variables. This enables the analysis of the closed-loop interconnection via the dissipativity theorem. In contrast to existing centralised conditions, the resulting criteria can be verified in a fully distributed manner, making them scalable to large-scale networks. 
The effectiveness of the proposed framework was illustrated through a network of heterogeneous Goodwin oscillators, where the relative dissipativity properties were characterised and used to establish synchronisation. 



\bibliographystyle{IEEEtran}
\bibliography{reference}

\end{document}